\newcommand{\states}{\mathbb{Q}}
\newcommand{\msgs}{\mathbb{M}\mathit{sgs}}
\newcommand{\tup}[1]{\langle{#1}\rangle}
\newcommand{\acts}{\mathbb{A}}
\newcommand{\call}[1]{\mathit{call}\ {#1}}
\newcommand{\ret}[1]{\mathit{ret}\ {#1}}
\newcommand{\mpimp}[1]{I_{mp}({#1})}
\newcommand{\smimp}[1]{I_{sm}({#1})}
\title{Impossibility of Strongly-Linearizable Message-Passing Objects
                     via Simulation by Single-Writer Registers}
\titlerunning{Impossibility of Strongly-Linearizable Message-Passing Objects}
\author{Hagit Attiya}{Computer Science Department, Technion}{}{0000-0002-8017-6457}{Partially supported by the Israel Science Foundation (grant number 380/18).} 
\author{Constantin Enea}{Université de Paris, IRIF, CNRS}{}{0000-0003-2727-8865}{Supported in part by the European Research Council (ERC) under the European Union's Horizon 2020 research and innovation program (grant agreement No 678177).}  
\author{Jennifer L.~Welch}{Texas A\&M University}{}{0000-0003-2725-9875}{Supported in part by the U.S.\ National Science Foundation under grant number 1816922.}  
\authorrunning{Attiya, Enea and Welch}
\keywords{Concurrent Objects, Message-passing systems, Strong linearizability, Impossibility proofs, BG simulation, Shared registers}
\begin{document}

\maketitle

\begin{abstract}
A key way to construct complex distributed systems is through modular
composition of linearizable concurrent objects.
A prominent example is shared registers, which have crash-tolerant
implementations on top of message-passing systems, allowing the advantages
of shared memory to carry over to message-passing.
Yet linearizable registers do not always behave properly when used inside
randomized programs.
A strengthening of linearizability, called strong linearizability,
has been shown to preserve probabilistic behavior, as well as other
``hypersafety'' properties.
In order to exploit composition and abstraction in message-passing systems,
it is crucial to know whether there exist strongly-linearizable
implementations of registers in message-passing.
This paper answers the question in the negative: \emph{there are no
strongly-linearizable fault-tolerant message-passing implementations
of multi-writer registers, max-registers, snapshots or counters}.
This result is proved by reduction from the corresponding result by
Helmi et al.
The reduction is a novel extension of the BG simulation
that connects shared-memory and message-passing, supports
long-lived objects, and preserves strong linearizability.
The main technical challenge arises from the discrepancy between
the potentially minuscule fraction of failures to be tolerated in
the simulated message-passing algorithm and the large fraction
of failures that can afflict the simulating shared-memory system.
The reduction is general and can be viewed as the inverse of the
ABD simulation of shared memory in message-passing.
\end{abstract}

\section{Introduction}

A key way to construct complex distributed systems is through modular
composition of linearizable concurrent objects~\cite{HerlihyW1990}.
A prominent example is the ABD fault-tolerant message-passing
implementation of shared registers~\cite{AttiyaBD1995}
and its multi-writer variant~\cite{LynchS1997}.
In multi-writer ABD, there is a set of client processes,
which accept invocations of methods on the shared register and provide responses,
and a set of server processes, which replicate the (virtual) state of the register.
When a read method is invoked at a client,
the client queries a majority of the servers to obtain the latest value,
as determined by a timestamp, and then sends the
chosen value back to a majority of the servers before returning.
When a write method is invoked, the client queries a majority of the servers to
obtain the latest timestamp, assigns a larger timestamp to the
current value to be written, and then sends the new value and its timestamp
to a majority of the servers before returning.
Client and server processes run on a set of (physical) nodes;
a node may run any combination of client and server processes.
The algorithm tolerates any distribution of process crashes as long
as less than half of the server processes crash; there is no limit on
the number of client processes that crash.

Variations of ABD have been used to simplify the design of numerous
fault-tolerant message-passing algorithms,
by providing the familiar shared-memory abstraction
(e.g., in \emph{Disk Paxos}~\cite{GafniL2003}).
Yet linearizable registers do not always compose correctly
with randomized programs:
In particular,
\cite{HadzilacosHT2021} demonstrates a randomized program
that terminates with constant probability when used with an
\emph{atomic} register, on which methods execute instantaneously,
but an adversary can in principle prohibit it from terminating when the register
is implemented in a message-passing system, where methods are not instantaneous.
The analogous result is shown in \cite{AttiyaEW2021blunting}
specifically for the situation when ABD is the implementation.

\emph{Strong linearizability}~\cite{GolabHW2011},
a restriction of linearizability,
ensures that properties holding when a concurrent program is executed
in conjunction with an atomic object,
continue to hold when the program is executed
with a strongly-linearizable implementation of the object.
Strong linearizability was shown~\cite{AttiyaE2019} to be
necessary and sufficient for preserving
\emph{hypersafety properties}~\cite{ClarksonS2010},
such as security properties and
probability distributions of reaching particular program states.

These observations highlight the importance of knowing whether
there exists a \emph{strongly-linearizable}
fault-tolerant message-passing implementation of a shared register.
If none exists,
it will be necessary to argue about hypersafety properties
without being able to capitalize on the shared-memory abstraction.

This paper brings bad news, answering this question in the negative:
{\em There are no strongly-linearizable fault-tolerant message-passing
implementations of several highly useful objects,
including multi-writer registers, max-registers, snapshots and counters.}

One might be tempted to simply conclude this result from the
impossibility result of Helmi et al.~\cite{HelmiHW12}
showing that there is no strongly-linearizable nonblocking
implementation of a multi-writer register from single-writer registers.
However, reproducing the proof in~\cite{HelmiHW12} for
the message-passing model is not simple,
as it is rather complicated and tailored to the shared-memory model.
In particular, parts of the proof require progress
when a process executes solo, which cannot be easily imitated when
the number of failures is much smaller than the number of processes.

Another approach is to reduce to the impossibility result in the
shared-memory model.
A simple reduction simulates message transfer between each
pair of message-passing nodes using dedicated shared registers.
This simulation uses the same number of shared-memory processes as
message-passing nodes and preserves the number of failures tolerated.
However, message-passing register implementations require the total
number of nodes to be at least twice the number of failures
tolerated~\cite{AttiyaBD1995},
while the proof of Helmi et al.\ critically depends on the fact that
all processes, except perhaps one, may stop taking steps.
It is not obvious how to simulate the workings of many message-passing
nodes, only a small fraction of which may fail, using the same number of
shared-memory processes, almost all of which may fail.

We take a different path to prove this result by reduction,
extending the BG simulation~\cite{BorowskyG1993}
in three nontrivial ways.
First, our reduction works across communication models,
and bridges the gap between shared-memory and message-passing systems.
Second, it supports long-lived objects,
on which each process can invoke any number of methods instead
of just one.
And most importantly, it preserves strong linearizability.

In more detail, we consider a hypothetical strongly-linearizable
message-passing algorithm
that implements a long-lived object.
Following contemporary expositions of such algorithms
(e.g.~\cite{HadjistasiNS2017,HuangHW2020}),
we assume that the algorithm is organized into
a set of $m$ client processes, any number of which may crash,
and $n$ server processes, up to $m-1$ of which may crash,
running on a set of nodes.
We obtain a nonblocking shared-memory implementation
of the same object for $m$ processes, $m-1$ of which may fail,
using \emph{single-writer} registers.

Our implementation admits
a \emph{forward simulation} to the message-passing implementation.
The forward simulation is a relation between states of the two
implementations.
Using the forward simulation, we can construct an execution of
the message-passing implementation from any execution of the shared-memory
implementation, starting from the initial state and moving forward step
by step, such that the two executions have the same sequence of method
invocations and responses.

Since the hypothetical message-passing algorithm is strongly linearizable,
a result from~\cite{AttiyaE2019,Rady} implies that there is a forward
simulation from the message-passing algorithm to the atomic object.
Since forward simulations compose, we obtain a forward simulation
from the shared memory algorithm to the atomic object.
Another result from~\cite{AttiyaE2019,Rady}
shows that a forward simulation implies strong linearizability.
Therefore,
a strongly-linearizable message-passing implementation of a multi-writer
register yields a strongly-linearizable shared-memory implementation
of a multi-writer register using single-writer registers.
Now we can appeal to the impossibility result of~\cite{HelmiHW12}
to conclude that there can be no strongly-linearizable message-passing
implementation of a multi-writer register.
The same argument shows the impossibility of strongly-linearizable
message-passing implementations of max-registers, snapshots, and counters,
which are proved in~\cite{HelmiHW12} to have no strongly-linearizable
implementations using single-writer registers.

We consider the reduction to be interesting in its own right,
because it shows how general message-passing object implementations can
be translated into corresponding shared-memory object implementations.
In this sense, it can be interpreted as an inverse of ABD,
which translates shared-memory object implementations into
message-passing object implementations.
It thus relates the two models,
keeping the same number of failures,
without restricting the total number of processes in the
message-passing model.
We believe it may have additional applications in other contexts.

\section{Objects}\label{sec:objects}

An \emph{object} is defined by a set of method names and an implementation that defines the behavior of each method. Methods can be invoked in parallel at different processes. The executions of an implementation are modeled as sequences of labeled transitions between global states that track the local states of all the participating processes (more precise definitions will be given in Section~\ref{ssec:mp_impls} and Section~\ref{ssec:sm_impls}). Certain transitions of an execution correspond to new invocations of a method or returning from an invocation performed in the past. Such transitions are labeled by call and return actions, respectively.
A \emph{call action} $\call{M(x)}_k$ represents the event of invoking a method $M$ with argument $x$; $k$ is an identifier of this invocation. A \emph{return action} $\ret{y}_k$ represents the event of the invocation $k$ returning value $y$. For simplicity, we assume that each method takes as parameter or returns a single value. We may omit invocation identifiers from call or return actions when they are not important. The set of executions of an object $O$ is denoted by $E(O)$.

\subsection{Object Specifications}

The specification of an object characterizes sequences of call and return actions, called \emph{histories}. The history of an execution $e$, denoted by $\mathit{hist}(e)$, is defined as the projection of $e$ on the call and return actions labeling its transitions. The set of histories of all the executions of an object $O$ is denoted by $H(O)$. Call and return actions $\call{M(x)}_k$ and $\ret{y}_k$ are called \emph{matching} when they contain the same invocation identifier $k$. A call action is called \emph{unmatched} in a history $h$ when $h$ does not contain the matching return. A history $h$ is called \emph{sequential} if every call $\call{M(x)}_k$ is immediately followed by the matching return $\ret{y}_k$. Otherwise, it is called \emph{concurrent}.

\emph{Linearizability}~\cite{HerlihyW1990} expresses the conformance 
of object histories to a given set of sequential histories, 
called a \emph{sequential specification}.
This correctness criterion is based on a relation $\sqsubseteq$ between histories: $h_1\sqsubseteq h_2$ iff there exists a history $h_1'$ obtained from $h_1$ by appending return actions that correspond to some of the unmatched call actions in $h_1$ (completing some pending invocations)
and deleting the remaining
unmatched call actions in $h_1$ (removing some pending invocations), such that $h_2$ is a permutation of $h_1'$ that preserves the order between return and call actions, i.e.,
if a given return action occurs before a given call action in $h_1'$ then the same holds in $h_2$. We say that $h_2$ is a \emph{linearization} of $h_1$.
A history $h_1$ is called \emph{linearizable} w.r.t.~a sequential specification $\mathit{Seq}$ iff there exists a sequential history $h_2\in \mathit{Seq}$ such that $h_1\sqsubseteq h_2$. An object $O$ is linearizable w.r.t.~$\mathit{Seq}$ iff each history $h_1\in H(O)$ is linearizable w.r.t.~$\mathit{Seq}$.

\emph{Strong linearizability}~\cite{GolabHW2011} is a strengthening 
of linearizability which requires that linearizations of an execution 
can be defined in a prefix-preserving manner.
Formally, an object $O$ is \emph{strongly linearizable} 
w.r.t.~$\mathit{Seq}$ iff there exists a function 
$f:E(O)\rightarrow \mathit{Seq}$ such that: 
\begin{enumerate}
\item for any execution $e\in E(O)$, $\mathit{hist}(e)\sqsubseteq f(e)$, and 
\item $f$ is prefix-preserving, i.e., 
for any two executions $e_1,e_2\in E(O_1)$ such that 
$e_1$ is a prefix of $e_2$, $f(e_1)$ is a prefix of $f(e_2)$.
\end{enumerate}

Strong linearizability has been shown to be equivalent to the existence 
of a forward simulation (defined below) from $O$ to an \emph{atomic object} 
$O(\mathit{Seq})$ defined by the set of sequential histories,
$\mathit{Seq}$~\cite{AttiyaE2019,Rady}.
Intuitively, if we consider an implementation of a sequential object with
histories in $\mathit{Seq}$, 
then the atomic object $O(\mathit{Seq})$ corresponds to running 
the same implementation in a concurrent context 
provided that method bodies execute in isolation.
Formally, the atomic object $O(\mathit{Seq})$ can be defined as a labeled 
transition system
where:
\begin{itemize}
	\item the set of states contains pairs formed of a history $h$ and 
    a linearization $h_s\in Seq$ of $h$,
	and the initial state contains an empty history and
    empty linearization, 
	\item the transition labels are call or return actions, or \emph{linearization point} actions $\mathit{lin}(k)$ for linearizing an invocation with identifier $k$, and
	\item the transition relation $\delta$ contains all the tuples $((h,h_s),a,(h',h_s'))$, where $a$ is a transition label, such that
\begin{align*}
& a\mbox{ is a call action} \implies h'=h\cdot a \mbox{ and } h_s'=h_s   \\
& a\mbox{ is a return action}\implies h'=h\cdot a\mbox{ and $h_s'=h_s$ and $a$ occurs in $h_s'$} \\
& a=lin(k) \implies h'=h \mbox{ and $h_s'=h_s\cdot \call{M(x)}_k\cdot \ret{y}_k$, for some $M$, $x$, and $y$.}
\end{align*}
Call actions are only appended to the history $h$, return actions ensure that the linearization $h_s'$ contains the corresponding method, and linearization  point actions extend the linearization with a new method.
\end{itemize}
The executions of $O(\mathit{Seq})$ are defined as sequences of transitions $s_0, a_0,s_1\ldots a_{k-1},s_k$, for some $k>0$, such that $(s_i, a_i, s_{i+1})\in \delta$ for each $0\leq i<k$. Note that $O(\mathit{Seq})$ admits every history which is linearizable w.r.t.~$\mathit{Seq}$, i.e., $H(O(\mathit{Seq}))=\{h: \exists h'\in\mathit{Seq}.\ h\sqsubseteq h'\}$.

Given two objects $O_1$ and $O_2$, a \emph{forward simulation} 
from $O_1$ to $O_2$ is a (binary) relation $F$ between states 
of $O_1$ and $O_2$ that maps
every step of $O_1$ to a possibly stuttering (no-op) step of $O_2$. 
Formally, $F$ is a forward simulation if it contains the pair of initial states of $O_1$ and $O_2$, and for every transition $(s_1,a,s_1')$ of $O_1$ between two states $s_1$ and $s_1'$ with label $a$ and every state $s_2$ of $O_2$ such that $(s_1,s_2)\in F$, there exists a state $s_2'$ of $O_2$ such that either:
\begin{itemize}
\item $s_2=s_2'$ (stuttering step) and $a$ is not a call or return action, or
\item $(s_1',s_2')\in F$,  $(s_2,a',s_2')$ is a transition of $O_2$, and if $a$ is a call or return action, then $a=a'$.
\end{itemize}

A forward simulation $F$ maps every transition of $O_1$ starting in a state $s_1$ to a transition of $O_2$ which starts in a state $s_2$ associated by $F$ to $s_1$. This is different from a related notion of \emph{backward} simulation that maps every transition of $O_1$ ending in a state $s_1'$ to a transition of $O_2$ ending in a state $s_2'$ associated by the simulation to $s_1'$ (see~\cite{DBLP:journals/iandc/LynchV95} for more details).

We say that $O_1$ \emph{strongly refines}
 $O_2$ when there exists a forward simulation from $O_1$ to $O_2$.
In the context of objects, a generic notion of refinement would correspond to the set of histories of $O_1$ being included in the set of histories of $O_2$, which is implied by but not equivalent to the existence of a forward simulation~\cite{AttiyaE2019,DBLP:journals/iandc/LynchV95}.
 We may omit the adjective strong for simplicity.

\subsection{Message-Passing Implementations}\label{ssec:mp_impls}

In message-passing implementations, methods can be invoked on a
distinguished set of processes called \emph{clients}. Clients are also
responsible for returning values of method invocations. The
interaction between invocations on different clients may rely on a
disjoint set of processes called \emph{servers}.
In general, we
assume that the processes are asynchronous
and communicate by sending and receiving messages that can experience
arbitrary delay but are not lost, corrupted, or spuriously generated.
Communication is permitted between any pair of processes,
not just between clients and servers.
A node may run any combination of a client process and a server process.
Processes are subject to crash failures;
we assume the client process and the server process running on the same
node can fail independently, which only strengthens our model.

To simplify the exposition, we model message-passing implementations using labeled transition systems instead of actual code.
Each process is defined by a transition system with states in an unspecified set $\states$.
A \emph{message} is a triple $\tup{\mathit{src},\mathit{dst},v}$ where $\mathit{src}$ is the sending process, $\mathit{dst}$ is the process to which the message is addressed, and $v$ is the message payload. The set of messages is denoted by $\msgs$.
The transition function $\delta_j$ of a server process $j$ is defined as a partial function $\delta_j:\states\times 2^{\msgs} \rightharpoonup \states\times 2^{\msgs}$. For a given local state $s$ and set of messages $\mathit{Msgs}$ received by $j$, $\delta_j(s,\mathit{Msgs})=(s',\mathit{Msgs}')$ defines the next local state $s'$ and a set of message $\mathit{Msgs}'$ sent by $j$. It is possible that $\mathit{Msgs}$ or $\mathit{Msgs}'$ is empty.
The transition function of a client $i$ is defined as $\delta_i:\states\times (2^{\msgs}\cup\acts) \rightharpoonup \states\times 2^{\msgs}$ where $\acts$ is a set of call and return actions.
Unlike servers, clients are allowed to perform additional \emph{method call} steps or \emph{method return} steps that are determined by call and return actions in $\acts$.
To simplify the presentation, we assume that a client state records whether
an invocation is currently
pending and what is the last returned value. Therefore, for a given state $s$ of a client $i$, $\mathit{pending}_i(s)=\mathit{true}$ iff
an invocation is currently pending in state $s$
and 
$\mathit{retVal}_i(s)=y$ iff there exists a state $s'$ such that $\delta_i(s',\ret{y})=(s,\_)$.

An implementation $\mpimp{m,n}$ with $m$ client processes and $n$ server processes is defined by an initial local state $s_0$ that for simplicity, we use to initiate the computation of all processes, and a set $\{\delta_k: 0\leq k < m+n\}$ of transition functions, where $\delta_k$, $0\leq k< m$, describe client processes and $\delta_k$, $m\leq k< m+n$, describe server processes.

\begin{figure} [t]
\small
  \centering
  \begin{mathpar}
    \inferrule[call]{ i < m\quad s_i=g(i) \downarrow_1\quad \mathit{pending}_i (s_i) = \mathit{false}\quad  \delta_i(s_i,\call{M(x)})=(s_i',\mathit{Msgs})}{
      g
      \xrightarrow{\call{M(x)}}_i
      g[i\mapsto \tup{s_i',(g(i)\downarrow_2 \cup \mathit{Msgs})}]
    }

    \inferrule[return]{ i < m\quad s_i=g(i) \downarrow_1\quad  \delta_i(s_i,\ret{y})=(s_i',\mathit{Msgs})}{
      g
      \xrightarrow{\ret{y}}_i
      g[i\mapsto \tup{s_i',(g(i)\downarrow_2 \cup \mathit{Msgs})}]
    }

    \inferrule[internal]{s_j=g(j) \downarrow_1\quad \mathit{Msgs}\subseteq ( \bigcup_{0\leq k< m+n} g(k)\downarrow_2 )\downarrow_{dst=j} \quad  \delta_i(s_j,\mathit{Msgs})=(s_j',\mathit{Msgs}')}{
      g
      \xrightarrow{ }_j
      g[j\mapsto \tup{s_j',(g(j)\downarrow_2 \cup \mathit{Msgs}')}]
    }
  \end{mathpar}
  \caption{State transitions of message-passing implementations. We define transitions using a standard notation where the conditions above the line must hold so that the transition given below the line is valid. For a function $f:A\rightharpoonup B$, $f[a\mapsto b]$ denotes the function $f':A\rightharpoonup B$ defined by $f'(c) = f(c)$, for every $c\neq a$ in the domain of $f$, and $f'(a)=b$. Also, for a tuple $t$, $t\downarrow_i$ denotes its $i$-th component, and for a set of messages $\mathit{Msgs}$, $\mathit{Msgs}\downarrow_{dst=j}$ is the set of messages in $\mathit{Msgs}$ with destination $j$.}
  \label{fig:op:sem:mp}
\end{figure}

The executions of a message-passing implementation $\mpimp{m,n}$ are interleavings of ``local'' transitions of individual processes.
A \emph{global state} $g$ is a function mapping each process to a local state and a pool of messages that the process sent since the beginning of the execution, i.e., $g:[0..m+n-1]\rightarrow \states\times 2^{\msgs}$. The initial global state $g_0$ maps each process to its initial local state and an empty pool of messages.
A transition between two global states advances one process according to its transition function. Figure~\ref{fig:op:sem:mp} lists the set of rules defining the transitions of $\mpimp{m,n}$. \textsc{call} and \textsc{return} transition rules correspond to steps of a client due to invoking or returning from a method, and \textsc{internal} represents steps of a client or a server where it advances its state due to receiving some set of messages. The set of received messages is chosen \emph{non-deterministically} from the pools of messages sent by all the other processes. The non-deterministic choice models arbitrary message delay since it allows sent messages to be ignored in arbitrarily many steps. The messages sent during a step of a process $i$ are added to the pool of messages sent by $i$ and never removed.

An \emph{execution} is a sequence of transition steps $g_0\xrightarrow{} g_1\xrightarrow{} \ldots$ between global states.
We assume that every message is \emph{eventually delivered}, i.e., for any infinite execution $e$, a transition step where a process $i$ sends a message $\mathit{msg}$ to a process $i'$ can \emph{not} be followed by an infinite set of steps of process $i'$ where the set of received messages in each step excludes $\mathit{msg}$.

\begin{remark}
For simplicity, our semantics allows
a message to be
delivered multiple times. We assume that the effects of message duplication can be avoided by including process identifiers and sequence numbers in message payloads. This way a process can track the set of messages it already received from any other process.
\end{remark}

We define a notion of crash fault tolerance for message-passing
implementations that asks for system-wide progress provided that at
most $f$ servers crash. Therefore, an implementation $\mpimp{m,n}$ is
\emph{$f$-nonblocking} iff for every infinite execution
$e=g_0\xrightarrow{} \ldots\xrightarrow{} g_k\xrightarrow{} \ldots$
and $k>0$ such that some invocation is pending in $g_k$,
if at least one client and $n-f$ servers execute a step
infinitely often in $e$, then some invocation completes after $g_k$
(i.e., the sequence of transitions in $e$ after $g_k$ includes a
\textsc{return} transition). 

For $m$ clients and $n$ servers, ABD (as well as its multi-writer version)
is $f$-nonblocking as long as $f < n/2$, while $m$ can be anything.
In fact, ABD provides a stronger liveness property,
in that every invocation by a non-faulty client eventually completes.
Furthermore, ABD only needs client-server communication.
So the communication model is weaker that what the model assumes and
the output is stronger than what the model requires.

\subsection{Shared Memory Implementations}\label{ssec:sm_impls}

In shared-memory implementations, the code of each method defines a sequence of invocations to a set of \emph{base} objects. In our work, the base objects are 
standard single-writer (SW)
registers. Methods can be invoked in parallel at a number of processes that are asynchronous and crash-prone. We assume that read and write accesses to SW registers are instantaneous.

We omit a detailed formalization of the executions of such an implementation. The pseudo-code we will use to define such implementations can be translated in a straightforward manner to executions seen as sequences of transitions between global states that track values of (local or shared) SW registers and the control point of each process.

We say that a shared-memory implementation is \emph{nonblocking} if
for every infinite execution $e=g_0\xrightarrow{} \ldots\xrightarrow{}
g_k\xrightarrow{} \ldots$ and $k>0$
such that some invocation is pending in $g_k$,
some invocation completes after $g_k$.
The definition of nonblocking for
shared-memory implementations demands system-wide progress even if all
processes but one fail.

\section{Shared-Memory Refinements of Message-Passing Implementations}\label{sec:ref}

We show that every message-passing object implementation with $m$ clients 
and any number $n$ of servers
can be refined by a shared-memory implementation with $m$ processes such that:
(1) the implementation uses only single-writer registers, and
(2) it is nonblocking if the message-passing implementation is 
$(m-1)$-nonblocking. 
By reduction from~\cite[Corollary~3.7]{HelmiHW12},
which shows that there is no nonblocking implementation of several
objects, including multi-writer registers, from single-writer
registers,
the existence of this refinement implies the impossibility of
strongly-linearizable message-passing implementations of the same objects
no matter how small the fraction of failures is.
This reduction relies on the equivalence between strong
linearizability and strong refinement and the compositionality of the
latter (see Section~\ref{sec:impossib}).

The shared-memory implementation should guarantee system-wide progress
even if all processes, except one, fail.
In contrast, the message-passing implementation only needs to guarantee
system-wide progress when no more than $f$ server processes fail.
Since the total number of servers may be arbitrarily larger than $f$,
it is impossible to define a ``hard-wired'' shared-memory refinement
where each shared-memory process simulates a pre-assigned message-passing
client or server process.
Instead, we have each of the $m$ shared-memory processes simulate a
client in the message-passing implementation while also cooperating
with the other processes in order to simulate steps of all the server
processes.
This follows the ideas in the BG simulation~\cite{BorowskyG1993}.
Overall, the shared-memory implementation simulates only a subset of
the message-passing executions, thereby, it is a \emph{refinement}
of the latter. The set
of simulated executions is however ``complete'' in the sense that a
method invocation is always enabled on a process that finished
executing its last invocation.

The main idea of the refinement is to use a hypothetical message-passing
implementation of an object using $m$ clients and $n$ servers
as a ``subroutine'' to implement the object
in a system with $m$ processes using SW registers.
Each process $p$ in the shared-memory algorithm is associated with a client
in the message-passing algorithm, and $p$, and only $p$, simulates the
steps of that client.
Since any number of shared-memory processes may crash, and any number of
message-passing clients may crash, this one-to-one association works fine.
However, the same approach will not work for simulating the message-passing
servers with the shared-memory processes, since the message-passing
algorithm might tolerate the failure of only a very small fraction of
servers, while the shared-memory algorithm needs to tolerate the failure
of all but one of its processes.
Instead, all the shared-memory processes cooperate to simulate each
of the servers.
To this end, each shared-memory process executes a loop in which it
simulates a step of its associated client, and then, for each one of the
servers in round-robin order, it works on simulating a step of that server.
The challenge is synchronizing the attempts by different shared-memory
processes to simulate the same step by the same server, without relying
on consensus.
We use \emph{safe agreement} objects to overcome this difficulty,
a separate one for the $r$-th step of server $j$, as follows:
Each shared-memory process proposes a value, consisting of
its local state and a set of messages to send,
for the $r$-th step of server $j$, and repeatedly checks (in
successive iterations of the outer loop) if the
value has been resolved, before moving on to the next step of server $j$.
Because of the definition of safe agreement, the only way that
server $j$ can be stuck at step $r$ is if one of the simulating shared-memory
processes crashes.

The steps of the client and server processes are handled in essentially
the same way by a shared memory-memory process, the main difference being
that client processes need to react to method invocations and provide
responses.
The current state of, and set of messages sent by, each
message-passing process is stored in a SW register.
The shared-memory process reads the appropriate register,
uses the message-passing transition function
to determine the next state and set of messages to send,
and then writes this information into the appropriate register.

More details follow, after we specify safe agreement.

\subsection{Safe Agreement Object}

The key to the cooperative simulation of server processes is a large
set of \emph{safe agreement} objects, each of which is used to agree on
a \emph{single} step of a server process. 
Safe agreement is a weak form of consensus that separates
the proposal of a value and
the learning of the decision into two methods.
A safe agreement object supports two wait-free methods,
{\em propose}, with argument $v \in V$
and return value {\em done}, and
{\em resolve}, with no argument
and return value 
$v \in V \cup \{\bot\}$.
While the methods are both wait-free, {\em resolve} may
return a ``non-useful'' value $\bot$.
Each process using such an object starts with an invocation
of {\em propose}, and
continues with a (possibly infinite) sequence of {\em resolve}
invocations;
in our simulation, {\em resolve} is not invoked after it
returns a value $v \ne \bot$.

The behavior of a safe agreement object is affected by the possible
crash of processes during a method.
Therefore, its correctness is not defined using linearizability
w.r.t.\ a sequential specification.
Instead, we define such an object to be correct when its (concurrent)
histories satisfy the following properties:
\begin{itemize}
\item {\em Agreement:}  If two {\em resolve} methods both return
non-$\bot$ values, then the values are the same.

\item {\em Validity:}  The return action of a {\em resolve} method
that returns a value $v \ne \bot$ is preceded by
a call action $\call{\mathit{propose}(v)}$.

\item {\em Liveness:}
If a {\em resolve} is invoked when there is no pending
{\em propose} method,
then it can return only a non-$\bot$ value.
\end{itemize}

The liveness condition for safe agreement is weaker than that for
consensus, as $\bot$ can be returned by {\em resolve} as long as a {\em
propose} method is pending.  Thus it is possible to implement
a safe agreement object using SW registers.
We present such an algorithm in Appendix~\ref{sec:sa}, based on those
in~\cite{BorowskyG1993,ImbsR2009}.

\subsection{Details of the Shared-Memory Refinement}

\renewcommand{\algorithmicrequire}{\textbf{Method}}

\begin{algorithm}[t]
\caption{Method $M$ at process $p_i$, $0\leq i < m$. Initially, resolved[$j$] is true and $r[j]$ is 0, for all $m\leq j<m+n$.}
\label{algorithm:simulation_new}
\small{
\begin{algorithmic}[1]
\Require $M(x)$:
\State client[$i$] $\gets \textsc{actStep}(\text{client[$i$]},\call{M(x)})$ \label{sim:call}  \Comment {simulating the call}
\While{true} \label{sim:after_call}
   \If{$\exists y.\ \delta_i(\text{client[$i$].state},\ret{y})$ is defined} \label{sim:return_condition}
      \State old\_client[$i$] $\gets$ client[$i$] \label{sim:before_ret} \Comment{used only to simplify the simulation relation}
      \State client[$i$] $\gets \textsc{actStep}(\text{client[$i$]},\ret{y})$ \label{sim:middle_ret} \Comment {simulating the return}
      \State \Return $y$  \label{sim:ret} 
   \EndIf
   \State client[$i$] $\gets \textsc{internalStep}(i)$ \label{sim:client}\Comment {simulating a step of client $i$}
   \For{$j \gets m,\ldots,m+n-1$}  \label{sim:loop} \Comment{simulate at most one step from each server}
           \If{resolved[$j$]} \label{sim:condition1}
                       \Comment{move on to next step of server $j$}
              \State s $\gets$ {\sc internalStep}($j$) \label{sim:server_step}
                       \Comment{returns a new state and pool of sent messages}
              \State r[$j$] $\gets$ r[$j$] + 1 \label{sim:inc}
              \State resolved[$j$] $\gets$ false
              \State SA[$j$][r[$j$]].propose(s) \label{sim:propose}
           \Else \Comment{keep trying to resolve current step of server $j$}
              \State s $\gets$ SA[$j$][r[$j$]].resolve()
              \If{s $\ne \bot$}  \label{sim:condition2} 
                 \State resolved[$j$] $\gets$ true
                 \State server[$i$][$j$] $\gets \langle \text{s, r[$j$]} \rangle$
                           \Comment{write to shared SW register}
                           \label{sim:server}
              \EndIf
	   \EndIf
    \EndFor
\EndWhile
\end{algorithmic}}
\end{algorithm}

\renewcommand{\algorithmicrequire}{\textbf{Function}}

\begin{algorithm}[t]
\caption{Auxiliary functions {\sc actStep}, {\sc internalStep}, and {\sc collectMessages}. $\textsc{mostRecent}$ is a declarative macro used to simplify the code.} 
\label{algorithm:internalStep}
\small{
\begin{algorithmic}[1]
\Require {\sc actStep}$(\text{client[$i$]},a)$:
\State \Return $\tup{\delta_i(\text{client[$i$].state}, a)\downarrow_1,\text{client[$i$].msgs}\cup \delta_i(\text{client[$i$].state}, a)\downarrow_2}$
\end{algorithmic}}
\vspace{2mm}
\small{
\begin{algorithmic}[1]
\Require {\sc internalStep}$(j)$ at process $p_i$, $0\leq i < m$:
\State Msgs $\gets$ {\sc collectMessages}($j$) \label{sim:finished_collection}
\If{$j < m$}\Comment{this is a client process}
\State ($q$, Msgs') $\gets$ $\delta_j$(client[$j$].state, Msgs)
         \label{line:transition1}
\Comment{determine new state and sent messages}
\State {\bf return} $\langle q, \text{client[$j$].msgs $\cup$ Msgs'} \rangle$
\Else\Comment{this is a server process}
\State ($q$, Msgs') $\gets$ $\delta_j$(server[$i$][$j$].state, Msgs)
         \label{line:transition2}
\Comment{determine new state and sent messages}
\State {\bf return} $\langle q, \text{server[$i$][$j$].msgs $\cup$ Msgs'} \rangle$
\EndIf
\end{algorithmic}}
\vspace{2mm}
\small{
\begin{algorithmic}[1]
\Require {\sc collectMessages}($j$):
\State Msgs $\gets \bigcup_{0\leq k\leq m-1}\text{client[$k$].msgs}\downarrow_{dst=j}$
             \Comment{identify messages sent to $j$ by clients}
\For{$k \gets m,\ldots,m+n-1$}
     \label{line:for-start}
\Comment{identify messages sent to $j$ by servers}
  \For{$i' \gets 0,\ldots,m-1$} \Comment{read the content of server registers}
      \State lserver[$i'$][$k$] $\gets$ server[$i'$][$k$]
  \EndFor
  \State s $\gets$ $\textsc{mostRecent}(\text{lserver}[0..m-1][k])$  \Comment{identify the most recent step of server $k$}
   \State Msgs $\gets$ Msgs $\cup$ s.msgs $\downarrow_{dst=j}$
\EndFor
\State \Return Msgs
\end{algorithmic}}

\vspace{2mm}
$\textsc{mostRecent}(\text{lserver[0..$m$-1][$k$]})$ = (lserver[$i$][$k$].state, lserver[$i$][$k$].msgs) such that lserver[$i$][$k$].sn =
$max_{0\leq j\leq m-1} \text{lserver[$j$][$k$].sn}$
\end{algorithm}

Let $\mpimp{m,n}$ be a message-passing implementation. We define a shared-memory implementation $\smimp{m}$ that refines $\mpimp{m,n}$ and that runs over a set of processes $p_i$ with $0\leq i <m$. Each process $p_i$ is associated with a client $i$ of $\mpimp{m,n}$. The code of a method $M$ of $\smimp{m}$ executing on a process $p_i$ is listed in Algorithm~\ref{algorithm:simulation_new}. This code uses the following set of shared objects (the other registers used in the code are local to a process):
\begin{itemize}
\item client[$i$]: SW register written by $p_i$, holding the current local state (accessed using .state) and pool of sent messages (accessed using .msgs) of client $i$; $0\leq i< m$
\item server[$i$][$j$]: SW register written by $p_i$, holding the current state and pool of sent messages of server $j$
according to $p_i$, 
tagged with a step number (accessed using .sn); $0\leq i< m$ and $m\leq j< m+n$
\item SA[$j$][$r$]: safe agreement object used to agree on the $r$-th step of server $j$ ($m\leq j< m+n$ and $r = 0, 1, \ldots$).
\end{itemize}
Initially, client[$i$] stores the initial state and an empty set of messages, for every $0\leq i<m$. Also, server[$i$][$j$] stores the initial state,  an empty set of messages, and the step number 0, for every $0\leq i<m$ and $m\leq j< m+n$.

A process $p_i$ executing a method $M$ simulates the steps that client $i$ would have taken when the same method $M$ is invoked. It stores the current state and pool of sent messages in client[$i$]. Additionally, it contributes to the simulation of server steps. Each process $p_i$ computes a proposal for the $r$-th step of a server $j$ (the resulting state and pool of sent messages -- see line~\ref{sim:server_step}) and uses the safe agreement object SA[$j$][$r$] to reach agreement
with the other processes (see line~\ref{sim:propose}). It computes a proposal for a next step of server $j$ only when agreement on the $r$-th step has been reached, i.e., it gets a non-$\bot$ answer from SA[$j$][$r$].resolve() (see the if conditions at lines~\ref{sim:condition1} and~\ref{sim:condition2}). However, it can continue proposing or agreeing on steps of other servers. It iterates over all server processes in a round-robin fashion, going from one server to another when resolve() returns $\bot$. This is important to satisfy the desired
progress guarantees.

Steps of client or server processes are computed locally using the transition functions of $\mpimp{m,n}$ in \textsc{actStep} and \textsc{internalStep}, listed in Algorithm~\ref{algorithm:internalStep}. A method $M$ on a process $p_i$ starts by advancing the state of client $i$ by simulating a transition labeled by a call action (line~\ref{sim:call}). To simulate an ``internal'' step of client $i$ (or a server step), a subtle point is computing the set of messages that are supposed to be received in this step. This is done by reading all the registers client[\_] and server[\_][\_] in a sequence 
and collecting the set of messages in client[\_].msgs or server[\_][\_].msgs that have $i$ as a destination. Since the shared-memory processes can be arbitrarily slow or fast in proposing or observing agreement on the steps of a server $j$, messages are collected only from the ``fastest'' process, i.e., the process $p_k$ such that server[$k$][$j$] contains the largest step number among server[0..$m$-1][$j$] (see the \textsc{mostRecent} macro). This is important to ensure that messages are eventually delivered. Since the set of received messages contains \emph{all} the messages from client[\_].msgs or server[\_][\_].msgs with destination $i$ as opposed to a non-deterministically chosen subset (as in the semantics of $\mpimp{m,n}$ -- see Figure~\ref{fig:op:sem:mp}), some steps of $\mpimp{m,n}$ may not get simulated by this shared-memory implementation. However, this is not required as long as the shared-memory implementation allows methods to be invoked arbitrarily on ``idle'' processes (that are not in the middle of another invocation). This is guaranteed by the fact that each client is simulated locally by a different shared-memory process.
A process $p_i$ returns whenever a return action is enabled in the current state stored in client[$i$] (see the condition at line~\ref{sim:return_condition}). Server steps are computed in a similar manner to ``internal'' steps of a client.

\subsection{Correctness of the Shared-Memory Refinement}

We prove that there exists a forward simulation from the shared-memory implementation defined in Algorithm~\ref{algorithm:simulation_new} to the underlying message-passing implementation $\mpimp{m,n}$, which proves that the former is a (strong) \emph{refinement} of the latter. The proof shows that roughly, the message passing state defined by the content of all registers client[$i$] with $0\leq i<m$ and the content of all registers server[$i$][$j$] that have the highest step number among server[$i'$][$j$] with $0\leq i'< m$ is reachable in $\mpimp{m,n}$.
Each write to a register client[$i$] corresponds to a transition in the message-passing implementation that advances the state of client $i$, and each write to server[$i$][$j$]
containing a step number that is written for the first time among all writes to server[\_][j] corresponds to a transition that advances the state of server $j$.
This choice is justified since the same value is written in these
writes, by properties of safe agreement.
Then, we also prove that $\smimp{m}$ is nonblocking provided that $\mpimp{m,n}$ is $(m-1)$-nonblocking.

\begin{theorem}\label{th:sim:safety}
$\smimp{m}$ is a refinement of $\mpimp{m,n}$.
\end{theorem}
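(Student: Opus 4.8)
The plan is to exhibit an explicit forward simulation $F$ from $\smimp{m}$ to $\mpimp{m,n}$; by the theorem's identification of refinement with the existence of a forward simulation, this suffices. Following the hint, I would map each shared-memory global state $G$ to a message-passing global state $\pi(G)$ defined by: for a client $i<m$, $\pi(G)(i)=\langle\text{client}[i].\text{state},\text{client}[i].\text{msgs}\rangle$; and for a server $m\le j<m+n$, $\pi(G)(j)=\langle s.\text{state},s.\text{msgs}\rangle$, where $s$ is the content of the register server[$k$][$j$] carrying the largest step number over $0\le k<m$ (the value returned by \textsc{mostRecent}). I then take $F=\{(G,\pi(G)) : G \text{ is reachable in } \smimp{m}\}$. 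The initial pair lies in $F$ since at $G_0$ every client[$i$] and server[$i$][$j$] holds the initial local state with empty pool and step number $0$, so $\pi(G_0)=g_0$; the auxiliary registers (old\_client, lserver, Msgs) and the safe-agreement objects do not appear in $\pi$, so steps touching only them are stuttering.

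Before matching steps I would establish three invariants along every shared-memory execution. (i) \emph{Agreement consistency}: for each server $j$ and step $r$, all writes to server[$\cdot$][$j$] tagged with step number $r$ store the same pair, because by \emph{Agreement} and \emph{Validity} every resolve of SA[$j$][$r$] returns the single decided proposal; hence $\pi(G)(j)$ is well defined irrespective of the tie-breaking in \textsc{mostRecent}. (ii) \emph{Monotonicity}: along the execution client[$i$].msgs only grows (each \textsc{actStep}/\textsc{internalStep} write unions the new messages with the old pool), and the projected pool $\pi(G)(j)\downarrow_2$ only grows, since the decided value of step $r$ is computed from that of step $r-1$ by adding freshly sent messages. (iii) \emph{Pre-state agreement}: the value proposed for step $r$ of server $j$ is computed by \textsc{internalStep}($j$) from a register server[$\cdot$][$j$] holding the decided value of step $r-1$; moreover step numbers for $j$ are written in increasing order, so just before the step-$r$ value is written for the first time the top step is exactly $r-1$ and the proposer's pre-state coincides with $\pi(G)(j)\downarrow_1$.

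With these in hand I would carry out a case analysis on the shared-memory transition. All purely local steps --- the reads inside \textsc{collectMessages}, the local computation in \textsc{internalStep}, invocations of propose, resolves returning $\bot$, the control flow, and writes to auxiliary registers --- leave every client[$\cdot$] and server[$\cdot$][$\cdot$] register unchanged, hence fix $\pi(G)$ and map to stuttering steps. The write to client[$i$] simulating the call (line~\ref{sim:call}) is the $\call{M(x)}$-labeled transition and matches the \textsc{call} rule, since client[$i$] is idle on entry; the write simulating the return (line~\ref{sim:middle_ret}, with the subsequent $\ret{y}$ at line~\ref{sim:ret}) matches the \textsc{return} rule with the same label; the write to client[$i$] from \textsc{internalStep}($i$) (line~\ref{sim:client}) matches the \textsc{internal} rule for client $i$ with received set equal to the Msgs returned by \textsc{collectMessages}. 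A write to server[$i$][$j$] (line~\ref{sim:server}) matches the \textsc{internal} rule for server $j$ when it records a step number for the first time across server[$\cdot$][$j$], and is stuttering otherwise (by (i) the projection is unchanged); in the first-write case (iii) makes the pre-state equal $\pi(G)(j)\downarrow_1$ and the new pool equal $\pi(G)(j)\downarrow_2$ unioned with the freshly sent messages, exactly as the rule prescribes.

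The step I expect to be the main obstacle is the server-step case, because \textsc{collectMessages} is \emph{not} atomic and is run by the proposing process strictly before the decisive first write: the pre-state and the received-message set are captured at the proposer's earlier local time, whereas the \textsc{internal} transition is matched at the later first-write time. To close this gap I would use invariant (ii) to argue that every message in the collected Msgs, having been read from some client[$k$].msgs or from a \textsc{mostRecent}-selected server[$k$][$j$], still lies in the corresponding pool of the current $\pi(G)$, so that $\mathit{Msgs}\subseteq(\bigcup_k\pi(G)(k)\downarrow_2)\downarrow_{dst=j}$ and the received set is a \emph{legal} (if non-maximal) nondeterministic choice for the \textsc{internal} rule; collecting a subset only omits some message-passing executions, which is fine for a refinement. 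Invariant (iii) simultaneously guarantees that, despite the stale reads, the pre-state still coincides with the projected server state at first-write time. The remaining cases are routine checks against the rules of Figure~\ref{fig:op:sem:mp}.
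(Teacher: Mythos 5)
Your overall architecture --- projecting the registers onto a message-passing global state via client[$i$] and $\textsc{mostRecent}(\text{server}[0..m\text{-}1][j])$, the three invariants (per-step agreement from the safe-agreement objects, monotonicity of message pools, pre-state agreement for proposals), treating the \emph{first} write of each server step number as the simulated server step and later writes as stuttering, and using monotonicity to legalize the stale, non-atomic \textsc{collectMessages} reads --- is essentially the paper's proof. But there is a genuine gap: your relation $F$ ignores the control points of the processes, and this makes it fail on exactly the transitions a forward simulation is not allowed to fudge, namely the call- and return-labeled ones. In the transition system of the shared-memory implementation, the transition labeled $\call{M(x)}$ is the invocation event itself; it changes only $p_i$'s control point, not client[$i$] (the write at line~\ref{sim:call} is a separate, later, internal transition). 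Hence $\pi(v_1)=\pi(v_2)$ across the call step, and your $F$ would have to match it either by stuttering --- explicitly forbidden by the definition when the label is a call action --- or by a transition $\pi(v_1)\xrightarrow{\call{M(x)}}_i \pi(v_1)$ of $\mpimp{m,n}$, i.e., a self-loop, which cannot exist because the \textsc{call} rule flips $\mathit{pending}_i$ from false to true, so the client's post-state necessarily differs from its pre-state. The symmetric problem occurs at the return: the transition labeled $\ret{y}$ (line~\ref{sim:ret}) changes no register, client[$i$] having already been updated at line~\ref{sim:middle_ret}, so your $F$ again needs an impossible self-loop \textsc{return} transition ($\mathit{pending}_i$ flips from true to false). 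Your phrase ``the write to client[$i$] simulating the call is the $\call{M(x)}$-labeled transition'' silently assumes that the invocation and line~\ref{sim:call} execute as one atomic transition, which is not how executions of the shared-memory implementation are modeled.

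This is precisely why the paper's simulation relation is control-point dependent: it maps $g(i)$ to $\textsc{actStep}(\text{client}[i],\call{M(x)})$ whenever $p_i$ is before control point~\ref{sim:after_call}, and to old\_client[$i$] whenever $p_i$ is at control point~\ref{sim:middle_ret} or~\ref{sim:ret}. With that definition, the shared-memory call transition simulates the message-passing \textsc{call} transition (the mapped client state ``jumps ahead'' to the post-line-\ref{sim:call} value), the write at line~\ref{sim:call} becomes stuttering, the write at line~\ref{sim:middle_ret} becomes stuttering (its effect is delayed), and the shared-memory return transition simulates the message-passing \textsc{return} transition. Your proof is repaired by adopting exactly this adjustment; note that old\_client[$i$], which you dismissed as an ignorable auxiliary register, is what makes the return case expressible, since at line~\ref{sim:ret} the pre-return value of client[$i$] is otherwise no longer available. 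The remainder of your argument (stuttering of purely local steps, the two \textsc{internal}-rule cases, and the monotonicity argument for $\mathit{Msgs}$) then goes through as written.
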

\begin{proof}
We define a relation $F$ between shared-memory and message-passing global states as follows: every global state of Algorithm~\ref{algorithm:simulation_new} is associated by $F$ with a message-passing global state $g$ such that for every client process $0\leq i< m-1$ and server process $m\leq j < n$,
$$
g(i) = \left\{\begin{array}{ll} \textsc{actStep}(\text{client[$i$]},\call{M(x)}),& \mbox{ if $p_i$ is before control point~\ref{sim:after_call} in Algorithm~\ref{algorithm:simulation_new}} \\
\text{old\_client[$i$]},& \mbox{ if $p_i$ is at control points~\ref{sim:middle_ret} or \ref{sim:ret} in Algorithm~\ref{algorithm:simulation_new}}\\
\text{client[$i$]},& \mbox{ otherwise}\end{array}\right.
$$
$$
g(j) = \textsc{mostRecent}(\text{server}[0..m-1][j])
$$

The first two cases in the definition of $g(i)$ are required so that call and return transitions in shared-memory are correctly mapped to call and return transitions in message-passing. The first case concerns call transitions and intuitively, it provides the illusion that a shared-memory call and the first statement in the method body (at line~\ref{sim:call}) are executed instantaneously at the same time. The second case concerns return transitions and ``delays'' the last statement before return (at line~\ref{sim:middle_ret}) so that it is executed instantaneously with the return.

Note that $F$ is actually a function since the message-passing global state is uniquely determined by the process control points and the values of the registers in the shared-memory global state. Also, the use of \textsc{mostRecent} is well defined because server[$i$][$j$].sn $=$ server[$i'$][$j$].sn implies that server[$i$][$j$].state $=$ server[$i'$][$j$].state and server[$i$][$j$].msgs $=$ server[$i'$][$j$].msgs, for every $0\leq i, i'< m$ (due to the use of the safe agreement objects).

In the following, we show that $F$ is indeed a forward simulation. Let us consider an indivisible step of Algorithm~\ref{algorithm:simulation_new} going from a global state $v_1$ to a global state $v_2$, and $g_1$ the message-passing global state associated with $v_1$ by $F$. We show that going from $g_1$ to the message-passing global state $g_2$ associated with $v_2$ by $F$ is a valid (possibly stuttering) step of the message-passing implementation. We also show that call and return steps of Algorithm~\ref{algorithm:simulation_new} are simulated by call and return steps of the message-passing implementation, respectively.

We start the proof with call and return steps. Thus, consider a step of Algorithm~\ref{algorithm:simulation_new} going from $v_1$ to $v_2$ by invoking a method $M$ with argument $x$ on a process $p_i$. Invoking a method in Algorithm~\ref{algorithm:simulation_new} will only modify the control point of $p_i$. Therefore, the message-passing global states $g_1$ and $g_2$ differ only with respect to process $i$: $g_1(i)$ is the value of client[$i$] in $v_1$ while $g_2(i)$ is the result of $\textsc{actStep}$ on that value and $\call{M(x)}$ (since the process is before control point~\ref{sim:after_call}). Therefore, $g_1\xrightarrow{\call{M(x)}}_i g_2$ (cf. Figure~\ref{fig:op:sem:mp}). For return steps of Algorithm~\ref{algorithm:simulation_new}, $g_1$ and $g_2$ also differ only with respect to process $i$: $g_1(i)$ is the value of old\_client[$i$] in $v_1$ (since the process is at control point~\ref{sim:ret}) while $g_2(i)$ is the value of client[$i$] in $v_2$. From lines~\ref{sim:before_ret}--\ref{sim:ret} of Algorithm~\ref{algorithm:simulation_new}, we get that the value of client[$i$] in $v_2$ equals the value of $\textsc{actStep}$ for $\text{old\_client[$i$]}$ in $v_1$ and the action $\ret{y}$ (note that old\_client[$i$] and client[$i$] are updated only by the process $p_i$). Therefore, $g_1\xrightarrow{\ret{y}}_i g_2$ (cf. Figure~\ref{fig:op:sem:mp}).

Every step of Algorithm~\ref{algorithm:simulation_new} except for the writes to client[$i$] or server[$i$][$j$] at lines~\ref{sim:client} and~\ref{sim:server} is mapped to a stuttering step of the message-passing implementation. This holds because $F$ associates the same message-passing global state to the shared-memory global states before and after such a step.

Let us consider a step of Algorithm~\ref{algorithm:simulation_new} executing the write to client[$i$] at line~\ref{sim:client} (we refer to the write that happens once $\textsc{internalStep}(i)$ has finished -- we do \emph{not} assume that line~\ref{sim:client} happens instantaneously). We show that it is simulated by a step of client $i$ of the message-passing implementation. By the definition of {\sc internalStep}, the value of client[$i$] in $v_2$ is obtained by applying the transition function of process $i$ on the state stored in client[$i$] of $v_1$ and some set of messages $\mathit{Msgs}$ collected from client[$i'$] and server[$i'$][$j$] with $0\leq i'<m$ and $m\leq j<n$. $\mathit{Msgs}$ is computed using the function {\sc collectMessages} that reads values
of client[$i'$] and server[$i'$][$j$] in shared-memory states that may precede $v_1$. However, since the set of messages stored in each of these registers increases monotonically\footnote{This is a straightforward inductive invariant of Algorithm~\ref{algorithm:simulation_new}.}, $\mathit{Msgs}$ is included in the set of messages stored in $v_1$ (i.e., the union of client[$i'$].msgs and server[$i'$][$j$].msgs for all $0\leq i'<m$ and $m\leq j<n$). Therefore,
\begin{align*}
\mathit{Msgs}\subseteq ( \bigcup_{0\leq k< n} g_1(k)\downarrow_2 )\downarrow_{dst=j},
\end{align*}
which together with the straightforward application of $\delta_i$ in {\sc internalStep} implies that $g_1\xrightarrow{ }_i g_2$.

Finally, let us consider a step of Algorithm~\ref{algorithm:simulation_new} executing the write to server[$i$][$j$] at line~\ref{sim:server}. Let $\tup{s,t}$ be the value written to server[$i$][$j$] in this step.
If there exists some other process $p_{i'}$ such that the register server[$i'$][$j$] in $v_1$ stores a tuple $\tup{s',t'}$ with $t\leq t'$, then this step is mapped to a stuttering step of the message-passing implementation. Indeed, the use of \textsc{mostRecent} in the definition of $F$ implies that it associates the same message-passing global state to the shared-memory states before and after such a step.
Otherwise, we show that this write is simulated by a step of server $j$ of the message-passing implementation. By the specification of the safe agreement objects, $s$ is a proposed value, and therefore, computed using {\sc internalStep} by a possibly different process $p_{i'}$. During this {\sc internalStep} computation server[$i'$][$j$] stores a value of the form $\tup{s',t-1}$, for some $s'$ (cf. the increment at line~\ref{sim:inc}). Since the values stored in the server[$i'$][$j$] registers are monotonic w.r.t.~their step number component, it must be the case that $s'$ is the outcome of $\textsc{mostRecent}(\text{server}[0..m-1][j])$ when applied on the global state $v_1$. Therefore, the {\sc internalStep} computation of $p_{i'}$ applies $\delta_j$ on the state $g_1(j)\downarrow_1$ and a set of messages $\mathit{Msgs}$ computed using {\sc collectMessages}. As in the case of the client[$i$] writes,
\begin{align*}
\mathit{Msgs}\subseteq ( \bigcup_{0\leq k< n} g_1(k)\downarrow_2 )\downarrow_{dst=j},
\end{align*}
which implies that $g_1\xrightarrow{ }_j g_2$.
\end{proof}

The message-passing executions simulated by the shared-memory executions
satisfy the eventual message delivery assumption. Indeed, since all the shared objects are wait-free, a message $\mathit{msg}$ stored in client[$i$] or server[$i$][$j$] will be read by all non-failed processes in a finite number of steps. Therefore, if $\mathit{msg}$ is sent to a client process $i'$, then it will occur in the output of $\textsc{internalStep}(i')$ at line~\ref{sim:client} on process $p_{i'}$ after a finite number of invocations of this function. Also, if $\mathit{msg}$ is sent to a server process $j'$, then it will be contained in the output of $\textsc{internalStep}(j')$ at line~\ref{sim:server_step} on every non-failed process $p_{i'}$ with $0\leq i'<m$ after a finite number of steps.

In the following, we show that the shared-memory implementation
is nonblocking (guarantees system-wide progress for $m$ processes, any
number of which can fail)
assuming that the message-passing implementation guarantees system-wide
progress if at most $m-1$ servers fail.

\begin{theorem}\label{th:sim:liveness}
If $\mpimp{m,n}$ is $(m-1)$-nonblocking, then $\smimp{m}$ is nonblocking. 
\end{theorem}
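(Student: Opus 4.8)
The plan is to argue by contradiction, using the forward simulation $F$ from the proof of Theorem~\ref{th:sim:safety} to translate a hypothetical failure of progress in $\smimp{m}$ into a failure of progress in $\mpimp{m,n}$. Suppose there is an infinite execution $e$ of $\smimp{m}$ and an index $k$ such that some invocation is pending in $g_k$ but no invocation completes after $g_k$ (no process ever reaches the return at line~\ref{sim:ret} after $g_k$). First I would classify processes as \emph{faulty} (taking only finitely many steps in $e$) or \emph{non-faulty} (taking infinitely many steps). Since $e$ is infinite, at least one process is non-faulty, so at most $m-1$ processes are faulty. Because no invocation completes after $g_k$, every non-faulty process is trapped inside a single method invocation, executing the \textbf{while} loop forever without returning; in particular at least one client $i$ is simulated without interruption.

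The heart of the argument is a counting bound on how many servers can stall. By the Liveness property of safe agreement, an object $\mathrm{SA}[j][r]$ can keep returning $\bot$ from \emph{resolve} only while some \emph{propose} on it is pending, which happens only if a process crashed in the middle of the \emph{propose} at line~\ref{sim:propose}. Each faulty process crashes at a single control point and hence leaves at most one such object with a pending \emph{propose}; moreover, for each server $j$ at most one of its safe agreement objects can be blocked, since proposing to $\mathrm{SA}[j][r+1]$ requires first resolving $\mathrm{SA}[j][r]$ to a non-$\bot$ value. Consequently at most $m-1$ servers are \emph{stuck} (their step counter $r[j]$ stops increasing system-wide), and at least $n-(m-1)$ servers are not stuck. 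The round-robin structure of the \textbf{for} loop (lines~\ref{sim:loop} onward) is essential here: a non-faulty process that invokes \emph{resolve} and receives $\bot$ immediately moves on to the next server, so it can never be permanently blocked by a stuck server and keeps helping every non-stuck server. For each non-stuck server $j$, every $\mathrm{SA}[j][r]$ that is proposed to eventually resolves to a non-$\bot$ value, so the condition at line~\ref{sim:condition1} holds infinitely often and $r[j]$ grows without bound.

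Next I would feed $e$ through the forward simulation $F$ to obtain an infinite execution $e'$ of $\mpimp{m,n}$ with the same call and return actions. By the proof of Theorem~\ref{th:sim:safety}, each write to $\mathrm{client}[i]$ at line~\ref{sim:client} maps to a step of client $i$, and each write to $\mathrm{server}[i][j]$ bearing a step number written as the new maximum maps to a step of server $j$. Since $p_i$ loops forever, client $i$ takes infinitely many steps in $e'$; since each of the $\ge n-(m-1)$ non-stuck servers has $r[j]$ unbounded, each such server takes infinitely many steps in $e'$. The execution $e'$ is a legitimate infinite message-passing execution, satisfying eventual message delivery as argued following Theorem~\ref{th:sim:safety}. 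Thus in $e'$ at least one client and at least $n-(m-1)$ servers take steps infinitely often, and the image of $g_k$ under $F$ still has a pending invocation. The $(m-1)$-nonblocking hypothesis then forces some invocation to complete after that point in $e'$, i.e., a return action occurs. Because $F$ maps return steps of $\smimp{m}$ to identical return actions of $\mpimp{m,n}$, this return corresponds to a completion after $g_k$ in $e$, contradicting our assumption.

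I expect the main obstacle to be the counting step: rigorously justifying that each faulty process blocks at most one safe agreement object (hence at most one server), uniformly across the infinitely many safe agreement objects used over the execution. Closely related, and equally delicate, is establishing that every non-stuck server genuinely advances infinitely often; this is exactly where the round-robin discipline must be invoked to rule out a non-faulty process idling forever on a stuck server while a resolvable one starves. By contrast, translating ``no completion in $e$'' into ``no return action in $e'$'' and pulling the guaranteed return back to $e$ should be routine once the correspondence from Theorem~\ref{th:sim:safety} is in hand.
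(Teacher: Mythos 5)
Your proof is correct and takes essentially the same approach as the paper's: an invocation can fail to terminate only because \emph{resolve} calls return $\bot$ indefinitely, each failed process leaves at most one pending \emph{propose} and hence blocks at most one server, so at most $m-1$ servers stall, and applying the $(m-1)$-nonblocking property of $\mpimp{m,n}$ to the simulated execution yields a completed invocation that pulls back to $\smimp{m}$. Your write-up is in fact more explicit than the paper's terse proof, spelling out the contradiction setup, the role of the round-robin discipline, and the translation through the forward simulation of Theorem~\ref{th:sim:safety}, all of which the paper leaves implicit.
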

\begin{proof}
Since Algorithm~\ref{algorithm:simulation_new} uses only wait-free objects (SW registers and safe agreement objects), an invocation of a method $M$ at a non-crashed process could be non-terminating only because the \emph{resolve} invocations on safe agreement objects return $\bot$ indefinitely.
The latter could forbid the progress of a single server process. By the specification of safe agreement, \emph{resolve} can return $\bot$ only if it started while a \emph{propose} invocation (on the same object) is pending. Since a process $p_i$ has at most one invocation of \emph{propose} pending at a time, the number of \emph{propose} invocations that remain unfinished indefinitely is bounded by the number of failed shared-memory processes. Therefore, $m-1$ failed shared-memory processes forbid progress on at most $m-1$ server processes. Since, $\mpimp{m,n}$ is $(m-1)$-nonblocking, we get that $\smimp{m}$ is nonblocking.
\end{proof}

The proof above also applies to an extension of Theorem~\ref{th:sim:liveness} to wait-freedom, i.e., $\smimp{m}$ is wait-free if $\mpimp{m,n}$ ensures progress of individual clients assuming at most $m-1$ server failures.

\section{Impossibility Results}\label{sec:impossib}

We show the impossibility of strongly-linearizable nonblocking implementations in an asynchronous message-passing system for several highly useful objects (including multi-writer registers). This impossibility result is essentially a reduction from~\cite[Corollary~3.7]{HelmiHW12} that states a corresponding result for shared-memory systems.
Since strong linearizability and (strong) refinement are equivalent and refinement is compositional~\cite{AttiyaE2019,DBLP:journals/iandc/LynchV95,Rady},
the results in Section~\ref{sec:ref} imply that any strongly-linearizable message-passing implementation can be used to define a strongly-linearizable implementation in shared-memory. Since the latter also preserves the nonblocking property, the existence of a message-passing implementation would contradict the shared-memory impossibility result.

\begin{theorem}\label{th:imposs:reduction}
Given a sequential specification $\mathit{Seq}$, there is a shared-memory implementation with $m$ processes, which is nonblocking, strongly linearizable w.r.t.~$\mathit{Seq}$, and which only uses SW registers, if there is a message-passing implementation with $m$ clients and an arbitrary number $n$ of servers, which is $(m-1)$-nonblocking and strongly linearizable w.r.t.~$\mathit{Seq}$.
\end{theorem}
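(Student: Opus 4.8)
The plan is to chain the two refinement results of Section~\ref{sec:ref} with the characterization of strong linearizability in terms of forward simulation from~\cite{AttiyaE2019,Rady}. Assume $\mpimp{m,n}$ is $(m-1)$-nonblocking and strongly linearizable w.r.t.~$\mathit{Seq}$, and take $\smimp{m}$ to be the shared-memory implementation of Algorithm~\ref{algorithm:simulation_new}. I will argue that $\smimp{m}$ witnesses the conclusion: it is nonblocking, strongly linearizable w.r.t.~$\mathit{Seq}$, and uses only SW registers.

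First I would invoke the equivalence from~\cite{AttiyaE2019,Rady} in the forward direction: since $\mpimp{m,n}$ is strongly linearizable w.r.t.~$\mathit{Seq}$, there is a forward simulation from $\mpimp{m,n}$ to the atomic object $O(\mathit{Seq})$. Next, Theorem~\ref{th:sim:safety} supplies a forward simulation from $\smimp{m}$ to $\mpimp{m,n}$. Because forward simulations are closed under relational composition, composing these two relations yields a forward simulation from $\smimp{m}$ directly to $O(\mathit{Seq})$. Applying the reverse direction of the same equivalence, the existence of this composite forward simulation shows that $\smimp{m}$ is strongly linearizable w.r.t.~$\mathit{Seq}$.

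For liveness I would appeal directly to Theorem~\ref{th:sim:liveness}: the hypothesis that $\mpimp{m,n}$ is $(m-1)$-nonblocking gives that $\smimp{m}$ is nonblocking. It then remains to meet the base-object constraint. Algorithm~\ref{algorithm:simulation_new} uses SW registers together with safe agreement objects, so the final step is to replace each safe agreement object by its SW-register implementation from Appendix~\ref{sec:sa}. I would note that this substitution preserves both the forward simulation of Theorem~\ref{th:sim:safety}, since that relation consults safe agreement only through its agreement, validity, and liveness properties rather than its internal state, and the nonblocking property, since the implementation is wait-free. After the substitution $\smimp{m}$ uses only SW registers, completing the argument.

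The step I expect to carry the most weight is the composition of forward simulations together with the correct bookkeeping of call and return labels through the composite relation: I must check that the composed simulation still matches call and return actions \emph{exactly}, which follows because each component simulation matches them and the transitions of $O(\mathit{Seq})$ carry call/return actions identically. The other genuinely delicate point is confirming that unfolding the safe agreement abstraction into its register implementation does not disturb the established forward simulation, which is precisely where I would lean on the fact that the simulation relation refers to safe agreement only through its specification.
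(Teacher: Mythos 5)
Your proposal is correct and follows essentially the same route as the paper's proof: compose the forward simulation of Theorem~\ref{th:sim:safety} with the one given by strong linearizability of $\mpimp{m,n}$ (via the equivalence of~\cite{AttiyaE2019,Rady}), invoke Theorem~\ref{th:sim:liveness} for the nonblocking property, and use Theorem~\ref{th:sa:existence} to realize the safe agreement objects from SW registers. Your extra remark that substituting the register-based safe agreement implementation does not disturb the simulation or the liveness argument is a point the paper leaves implicit, but it is the same argument.
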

\begin{proof}
Given a message-passing implementation $\mpimp{m,n}$ as above, Theorem~\ref{th:sim:safety} and Theorem~\ref{th:sim:liveness} show that the shared-memory implementation $\smimp{m}$ defined in Algorithm~\ref{algorithm:simulation_new} is a refinement of $\mpimp{m,n}$ and nonblocking. Since strong linearizability w.r.t.~$\mathit{Seq}$ is equivalent to refining $O(\mathit{Seq})$ (see Section~\ref{sec:objects}) and the refinement relation (defined by forward simulations) is transitive\footnote{If there is a forward simulation $F_1$ from $O_1$ to $O_2$ and a forward simulation $F_2$ from $O_2$ to $O_3$, then the composition $F_1\circ F_2 = \{(s_1,s_3): \exists s_2. (s_1,s_2)\in F_1\land (s_2,s_3)\in F_2\}$ is a forward simulation from $O_1$ to $O_3$.}, we get that $\smimp{m,n}$ is a refinement of $O(\mathit{Seq})$, which implies that it is strongly linearizable w.r.t.~$\mathit{Seq}$. Finally, Theorem~\ref{th:sa:existence} shows that the safe agreement objects in $\smimp{m}$ can be implemented only using SW registers, which implies that $\smimp{m}$ only relies on SW registers.
\end{proof}

\begin{corollary}
There is no strongly linearizable message-passing implementation with 
$m\geq 3$ clients of multi-writer registers, max-registers, counters, or snapshot objects, which is $(m-1)$-nonblocking.
\end{corollary}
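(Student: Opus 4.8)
The plan is to prove the corollary by contradiction, instantiating the reduction of Theorem~\ref{th:imposs:reduction} once for each of the four target objects and then invoking the shared-memory impossibility result of Helmi et al.~\cite[Corollary~3.7]{HelmiHW12}. I would fix one object $O$ among multi-writer registers, max-registers, counters, and snapshot objects, and let $\mathit{Seq}$ denote its sequential specification.

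First I would assume, toward a contradiction, that some message-passing implementation with $m \geq 3$ clients and an arbitrary number $n$ of servers is both $(m-1)$-nonblocking and strongly linearizable w.r.t.\ $\mathit{Seq}$. Feeding this implementation into Theorem~\ref{th:imposs:reduction} immediately produces a shared-memory implementation on $m$ processes that uses only single-writer registers, is nonblocking, and is strongly linearizable w.r.t.\ the same $\mathit{Seq}$. Thus the hypothetical message-passing object would be transformed, with strong linearizability and the progress guarantee intact, into a strongly-linearizable nonblocking shared-memory implementation of $O$ from single-writer registers alone.

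The next step is to observe that this is exactly what Helmi et al.~\cite[Corollary~3.7]{HelmiHW12} rule out: for $m \geq 3$ processes there is no nonblocking, strongly-linearizable implementation of a multi-writer register, max-register, counter, or snapshot object from single-writer registers. Since the reduction of Theorem~\ref{th:imposs:reduction} preserves the process count, the implementation obtained in the previous step contradicts this lower bound, so the assumed message-passing implementation cannot exist. As $O$ ranges over all four objects, the corollary follows.

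Because the whole argument is a short chain of implications, the substance lies in checking that the hypotheses align. The step I expect to require the most care is matching the process-count threshold: the reduction carries the number $m$ of clients through to $m$ shared-memory processes unchanged, so I must confirm that $m \geq 3$ is precisely the regime in which Helmi et al.\ establish their impossibility, and that their statement indeed covers all four specifications. This is what makes a single reduction theorem sufficient---only the underlying shared-memory impossibility changes from one object to the next, while Theorem~\ref{th:imposs:reduction} is reused verbatim with $\mathit{Seq}$ instantiated appropriately.
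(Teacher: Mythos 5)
Your proposal is correct and follows exactly the paper's own argument: instantiate Theorem~\ref{th:imposs:reduction} on the hypothetical $(m-1)$-nonblocking, strongly-linearizable message-passing implementation to obtain a nonblocking, strongly-linearizable shared-memory implementation from single-writer registers, and derive a contradiction with~\cite[Corollary~3.7]{HelmiHW12}. The paper's proof is just a one-sentence version of the same reduction, so there is nothing to add beyond your (appropriate) care in checking that the process count and the list of four objects match the hypotheses of the Helmi et al.\ result.
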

\begin{proof}
If such an implementation existed, then Theorem~\ref{th:imposs:reduction} would imply the existence of a strongly linearizable nonblocking implementation from single-writer registers, which is impossible by~\cite[Corollary~3.7]{HelmiHW12}.
\end{proof}

\section{Conclusions and Related Work}
\label{section:related}

In order to exploit composition and abstraction in
message-passing systems, it is crucial to understand how properties of
randomized programs are preserved when they are composed with object
implementations.
This paper extends the study of strong linearizability to message-passing
object implementations, showing how results for shared-memory object
implementations can be translated.
Consequently, there can be no strongly-linearizable crash-tolerant
message-passing implementations of multi-writer registers, max-registers,
counters, or snapshot objects.

In the context of shared-memory object implementations,
several results have shown the limitations of
strongly-linearizable implementations.
Nontrivial objects,
including multi-writer registers, max registers, snapshots, and counters,
have no nonblocking strongly-linearizable implementations
from single-writer registers~\cite{HelmiHW12}.
In fact, even with multi-writer registers,
there is no wait-free strongly-linearizable implementation
of a monotonic counter~\cite{DenysyukW15},
and, by reduction, neither of snapshots nor of max-registers.
Queues and stacks do not have an $n$-process nonblocking
strongly-linearizable implementation from objects whose readable
versions have consensus number less than $n$~\cite{AttiyaCH2018}.

On the positive side,
any consensus object is strongly linearizable,
which gives an \emph{obstruction-free} strongly-linearizable
universal implementation (of any object)
from single-writer registers~\cite{HelmiHW12}.
Helmi et al.~\cite{HelmiHW12} also give a \emph{wait-free}
strongly-linearizable implementation of \emph{bounded} max register
from multi-writer registers~\cite{HelmiHW12}.
When updates are strongly linearizable,
objects have \emph{nonblocking} strongly-linearizable implementations
from multi-writer registers~\cite{DenysyukW15}.
The space requirements of the latter implementation is avoided in
a \emph{nonblocking} strongly-linearizable implementation of
snapshots~\cite{OvensW19}.
This snapshot implementation is then employed with an algorithm
of~\cite{AspnesH1990wait}
to get a \emph{nonblocking} strongly-linearizable universal implementation
of any object in which all methods either commute or overwrite.

The \emph{BG simulation} has been used in many situations
and several communication models.
Originally introduced for the shared-memory model~\cite{BorowskyG1993},
it showed that $t$-fault-tolerant algorithms to solve \emph{colorless}
tasks (like set agreement) among $n$ processes,
can be translated into $t$-fault-tolerant algorithms for $t+1$ processes
(i.e., wait-free algorithms) for the same problem.
The \emph{extended} BG simulation~\cite{Gafni2009}
also works for so-called \emph{colored} tasks,
where different processes must decide on different values.
Another extension of the BG simulation~\cite{FraigniaudGRR2014}
was used to dynamically reduce synchrony of a system.
(See additional exposition in~\cite{ImbsR2009,Kuznetsov2013}.)

To the best of our knowledge,
all these simulations allow only a single invocation by each process,
and none of them handles \emph{long-lived} objects.
Furthermore, they are either
among different variants of the shared-memory model~\cite{BorowskyG1993,FraigniaudGRR2014,Gafni2009,Kuznetsov2013}
or among different failure modes in the message-passing model~\cite{DolevG2016,ImbsRS2016}.

This paper deals with multi-writer registers
and leaves open the question of finding a strongly-linearizable
message-passing implementation of a \emph{single-writer} register.
The original ABD register implementation~\cite{AttiyaBD1995},
which is for a single writer,
is not strongly linearizable~\cite{HadzilacosHT2020arxiv4}.
Recent work~\cite{ChanHHT2021} addresses this open question, showing
that a simple object called ``test or set'' has no strongly-linearizable
message-passing implementation and then appealing to the fact that
a single-writer register is a generalization of this object.

Recently, two ways of mitigating the bad news of this paper have been
proposed, both of which move away from strong linearizability.
In~\cite{HadzilacosHT2021}, a consistency condition that is
intermediate between linearizability and strong linearizability, called
``write strong-linearizability'' is defined and it is shown that
for some program this condition is
sufficient to preserve the property of having non-zero termination probability,
and that a variant of ABD satisfies write strong-linearizability.
In another direction, \cite{AttiyaEW2021blunting} presents a simple
modification to ABD that preserves the
property of having non-zero termination probability; the modification is
to query the servers multiple times instead of just once and then randomly
pick which set of responses to use.
This modification also applies to the snapshot implementation 
in~\cite{AfekADGMS1993}; note that snapshots 
do not have nonblocking strongly-linearizable implementations,
in either shared-memory (proved in~\cite{HelmiHW12})
or message-passing (as we prove in this paper, by reduction).

\bibliographystyle{plainurl}
\bibliography{citations}

\appendix
\section{An Implementation of Safe Agreement}\label{sec:sa}

We present an algorithm to implement a safe agreement
object that only uses \emph{single-writer} registers.
The algorithm is based on \cite{BorowskyG1993,ImbsR2009}.

The crux of the safe agreement algorithm is to identify a
\emph{core} set of processes,
roughly, those who were first to start the algorithm.
Once the core set is identified, the proposal of a fixed process
in this set is returned.
Our algorithm picks the proposal of the process with minimal id,
but the process with maximal id can be used just as well.
A ``double collect'' mechanism is used to identify the core set,
by having every process
write its id and repeatedly read all the processes' corresponding
variables until it observes no change.\footnote{
    This use of double collect is a stripped-down version of
    the snapshot algorithm~\cite{AfekADGMS1993}.}
The process then writes the set consisting of all the ids collected.
To resolve, a process reads all these sets, and intuitively,
wishes to take the smallest set among them, $C$, as the core set.
However, it is possible that an even smaller set will be written later.
The key insight of the algorithm (identified by~\cite{BorowskyG1993})
is that such a smaller set can only be written by a process whose
identifier is already in $C$.
Thus, once all processes in $C$ wrote their sets,
either one of them is strictly contained in $C$
(and hence, can replace it),
or no smaller set will ever be written.

The pseudocode is listed in Algorithm~\ref{algorithm:SA}.
The algorithm uses the following single-writer shared registers
(the other registers used in the code are local to a process):
\begin{itemize}
\item {\em Val}$[i]$: register written by $p_i$, holding a proposal,
   initially $\bot$; $0 \le i < m$
\item {\em Id}$[i]$: register written by $p_i$, holding its own id; initially $\bot$;
   $0 \le i < m$
\item {\em Set}$[i]$: register written by $p_i$, holding a set of process ids,
   initially $\emptyset$; $0 \le i < m$
\end{itemize}

\begin{algorithm}[tb]
\caption{Safe agreement, code for process $p_i$.}
\label{algorithm:SA}
\begin{algorithmic}[1]

\State {\bf Propose}($v$):
\State {\em Val}$[i] \gets v$ \Comment{ announce own proposal }
\State {\em Id}$[i] \gets i$ \Comment{ announce own participation }
        \label{line:sa-write-id}
\Repeat \Comment{ double collect }
        \label{line:prop-repeat-start}
\State {\bf for} $j \gets 0,\ldots,m-1$ {\bf do}
               {\em collect}$1[j] \gets$ {\em Id}$[j]$
\State {\bf for} $j \gets 0,\ldots,m-1$ {\bf do}
               {\em collect}$2[j] \gets$ {\em Id}$[j]$
\Until {{\em collect}$1 =$ {\em collect}$2$}
                \Comment{ all components are equal }
        \label{line:prop-repeat-end}
\State {\em Set}$[i] \gets \{ j :$ {\em collect}$1[j] \neq \bot \}$
        \label{line:sa-write-set}
\Statex
\State {\bf Resolve}():
\State {\bf for} $j \gets 0,\ldots,m-1$ {\bf do}
               $s[j] \gets$ {\em Set}$[j]$ \Comment{ read $m$ registers }
       \label{line:sa-read-sets}
\State $C \gets$ smallest (by containment) non-empty set in
          $s[0,\ldots,m-1]$
\If{for every $j \in C$, $( ( s[j] \neq \emptyset )$ and
                         $( C \subseteq s[j] ) )$} \label{line:sa-return-cond}
        \State \Return {\em Val}$[\min(C)]$
               \label{line:sa-return-val}
        \Comment{ the proposal of the process with minimal id in $C$ }
\Else \State \Return $\bot$ \Comment{ no decision yet }
        \label{line:sa-return-bot}
\EndIf
\end{algorithmic}
\end{algorithm}

Notice that {\em propose} and {\em resolve} are wait-free.
This is immediate for {\em resolve}.
For {\em propose}, note that the double collect loop
(in Lines~\ref{line:prop-repeat-start}--\ref{line:prop-repeat-end})
is executed at most $m$ times,
since there are at most $m$ writes to {\em Id}
(one by each process).

\begin{theorem}
\label{th:sa:existence}
Algorithm~\ref{algorithm:SA} is an implementation of safe agreement
from single-writer registers.
\end{theorem}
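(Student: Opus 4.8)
The plan is to verify the three defining properties of safe agreement—\emph{Agreement}, \emph{Validity}, and \emph{Liveness}—since wait-freedom of both methods was already argued above, and the single-writer requirement is immediate from the construction: each register $Val[i]$, $Id[i]$, $Set[i]$ is written only by $p_i$. I would dispatch \emph{Validity} first, as it is direct. If a \emph{resolve} returns a value $v\neq\bot$, it returns $Val[\min(C)]$ at Line~\ref{line:sa-return-val}, so $v=Val[\min(C)]$; since this value is non-$\bot$, process $\min(C)$ must have executed the write $Val[\min(C)]\gets v$, which is the very first step of an invocation $\call{\mathit{propose}(v)}$, and this write precedes the \emph{resolve}'s read of $Val[\min(C)]$ and hence its return. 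The substance of the proof is \emph{Agreement} and \emph{Liveness}, and I would reduce both to a single structural lemma about the sets written at Line~\ref{line:sa-write-set}.

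The key lemma, which I expect to be the main obstacle, is that \emph{the family of sets written to the $Set[\cdot]$ registers is totally ordered by containment}. I would prove this from the stability of the double collect. Write $W_j$ for the time $p_j$ executes $Id[j]\gets j$ (Line~\ref{line:sa-write-id}). Because a successful double collect reads each $Id[\cdot]$ twice with identical results, $j\in Set[i]$ is equivalent to $W_j$ preceding $p_i$'s \emph{first} read of $Id[j]$, while $j\notin Set[i]$ is equivalent to $W_j$ following $p_i$'s \emph{second} read of $Id[j]$. Suppose for contradiction that two written sets $S_a,S_b$ are incomparable, with witnesses $c\in S_a\setminus S_b$ and $d\in S_b\setminus S_a$. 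The four membership facts translate into: $b$'s second read of $Id[c]$ precedes $a$'s first read of $Id[c]$, and $a$'s second read of $Id[d]$ precedes $b$'s first read of $Id[d]$. Combining these with the fact that within each process the entire first collect precedes the entire second collect (so $a$'s first read of $Id[c]$ precedes $a$'s second read of $Id[d]$, and $b$'s first read of $Id[d]$ precedes $b$'s second read of $Id[c]$), I would chain the four read events into a cycle of strict temporal precedences, yielding the contradiction $t<t$. This total-order lemma is the heart of the argument; everything else is bookkeeping.

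Given the lemma, at any point there is a unique smallest written set, which I call $C^\ast$. It is \emph{closed}: for every $j\in C^\ast$, some process observed $Id[j]$, so $p_j$ participated and (once it finishes) writes $Set[j]$, and minimality of $C^\ast$ under the total order forbids $Set[j]\subsetneq C^\ast$, giving $C^\ast\subseteq Set[j]$. For \emph{Agreement}, consider any \emph{resolve} that returns non-$\bot$: the set $C$ it selects is the smallest it read, and the test at Line~\ref{line:sa-return-cond} certifies that every $j\in C$ already wrote a set containing $C$; if $C^\ast\subsetneq C$, then the writer of $C^\ast$ lies in $C^\ast\subseteq C$, so the test forces $C\subseteq C^\ast$, a contradiction, whence $C=C^\ast$. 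Moreover the test guarantees that every member of $C^\ast$ has already written, so by write-once no strictly smaller set can appear afterwards; hence all non-$\bot$ resolves use the same $C^\ast$ and return $Val[\min(C^\ast)]$. Finally, for \emph{Liveness}: if a \emph{resolve} runs with no pending \emph{propose}, all participating processes have written their final $Set[\cdot]$ values, so the smallest set it reads (Line~\ref{line:sa-read-sets}) is exactly $C^\ast$, which is closed, the test at Line~\ref{line:sa-return-cond} succeeds, and a non-$\bot$ value is returned.
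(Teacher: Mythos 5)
Your proof is correct and takes essentially the same approach as the paper's: the same decomposition into validity, agreement, and liveness, resting on the same key lemma that all sets written to the \emph{Set} registers are totally ordered by containment, proved from the stability of the double collect. The differences are cosmetic---you establish the comparability lemma by chaining four read/write precedences into a temporal cycle where the paper uses a without-loss-of-generality ordering of the two \emph{Id}-writes, and you phrase agreement via a globally minimal written set $C^\ast$ where the paper compares the two resolvers' sets pairwise; both variants rely on the same underlying facts (comparability, every written set contains its writer's id, and the certification test at Line~\ref{line:sa-return-cond}).
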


\begin{proof}
To show validity,
first note that a non-$\bot$ value $v$ returned by any {\em resolve}
method is that stored in {\em Val}$[i]$ for some $i$ such that
$p_i$ wrote to its {\em Id}$[i]$ shared variable.
The code ensures that before $p_i$ writes to {\em Id}$[i]$,
it has already written $v$ to {\em Val}$[i]$,
in response to the invocation of {\em propose}$(v)$.

Agreement and liveness hinge on the following comparability
property of the sets of ids written to the array {\em Set}:

\begin{lemma}
\label{lemma:comparable sets}
For any two processes $p_i$ and $p_j$, if
$p_i$ writes $S_i$ to {\em Set}$[i]$ and
$p_j$ writes $S_j$ to {\em Set}$[j]$,
then either $S_i \subseteq S_j$ or $S_j \subseteq S_i$.
\end{lemma}

\begin{proof}
Assume by contradiction that $S_i$ and $S_j$ are incomparable, i.e., there exist $i'\in S_i\setminus S_j$ and $j'\in S_j\setminus S_i$.
Without loss of generality, let us assume that $p_{i'}$ writes its id to {\em Id}$[i']$ before $p_{j'}$ writes its id to {\em Id}$[j']$ (otherwise, a symmetric argument applies).
Since $j'\in S_j$, the last collect in the loop at line~\ref{line:prop-repeat-start} on process $p_j$ starts after $p_{j'}$ writes to {\em Id}$[j']$ and $p_{i'}$ writes to {\em Id}$[i']$. Therefore, the process $p_j$ must have read $i'$ from {\em Id}$[i']$ in this collect (i.e., {\em collect}$2[i']$ = $i'$), which contradicts the assumption that $i'\not\in S_j$.
\end{proof}

Suppose $p_i$ returns a non-$\bot$ value {\em Val}$[k]$
because $k$ is the smallest id in $C = s[h]$,
which is the smallest Set read by $p_i$ in Line~\ref{line:sa-read-sets},
and $p_{i'}$ returns a non-$\bot$ value {\em Val}$[k']$
because $k'$ is the smallest id in $C' = s[h']$,
which is the smallest Set read by $p_{i'}$ in Line~\ref{line:sa-read-sets}.
Assume in contradiction that $k \ne k'$,
which implies that $C \ne C'$
and $h \ne h'$.
By Lemma~\ref{lemma:comparable sets}, $C$ and $C'$ are comparable;
without loss of generality, assume $C \subseteq C'$.
Then $C\subset C'$, which contradicts the condition for returning a non-$\bot$ value (Line~\ref{line:sa-return-cond}) in $p_{i'}$. Indeed, since $h\in C\subset C'$ (every process reads \emph{Id} registers after writing to its own), $p_{i'}$ should have read {\em Set}$[h]$ before returning and witnessed the fact that it contains a smaller set than {\em Set}$[h']$.

We now consider liveness.
Assume no process has an unfinished {\em propose} method.
Thus, every process that writes to its {\em Id} variable in
Line~\ref{line:sa-write-id},
also writes to its {\em Set} variable in Line~\ref{line:sa-write-set}.
Consider any {\em resolve} method, say by $p_i$,
that begins after the last {\em propose} method completes.
Let $C$ be the smallest non-empty set obtained by $p_i$ in
Line~\ref{line:sa-read-sets}.
For each $j \in C$, {\em Set}$[j]$ is not empty,
since all the {\em propose} methods completed.
By the choice of $C$, Lemma~\ref{lemma:comparable sets} ensures that $C$ is
a subset of {\em Set}$[j]$.
Thus $p_i$ returns a non-$\bot$ value in Line~\ref{line:sa-return-val}.
\end{proof}

\end{document}